\NeedsTeXFormat{LaTeX2e}
[1994/12/01]
\documentclass{aomart}
\pdfoutput=1

\usepackage{tikz}
\usetikzlibrary{calc,arrows,automata}
\usetikzlibrary{matrix,positioning,arrows,decorations.pathmorphing,shapes}
\usetikzlibrary{shapes,snakes}
\chardef\bslash=`\\ 





\hfuzz1pc 


\newtheorem[{}\it]{thm}{Theorem}[section]

\newtheorem{lem}[thm]{Lemma}
\newtheorem{prop}[thm]{Proposition}

\theoremstyle{definition}

\newtheorem*[{}\it]{notation}{Notation}




\newcommand{\eval}[2][\right]{\relax
  \ifx#1\right\relax \left.\fi#2#1\rvert}




\title[Diameter Constrained Reliability: Computational Complexity in terms of the diameter and number of terminals]
{Diameter Constrained Reliability: Computational Complexity in terms of the diameter and number of terminals}
\author{Eduardo Canale, Pablo Romero}
\email{canale@fing.edu.uy; promero@fing.edu.uy} 
\address{Universidad de la Rep\'ublica\\
Montevideo, Uruguay}
\fulladdress{Instituto de Matem\'atica y Estad\'istica, Prof. Ing. Rafael Laguardia. IMERL\\
Julio Herrera y Reissig 565\\
PC, 11300}

\keyword{Network Reliability}
\keyword{Computational Complexity}
\subject{primary}{matsc2000}{123456789}
\subject{secondary}{matsc2000}{FFFF}


\volumenumber{1}
\issuenumber{1}
\publicationyear{2014}

%
%
%

\begin{document}

\begin{abstract}
Let $G=(V,E)$ be a simple graph with $|V|=n$ nodes and $|E|=m$ links, a subset  
$K \subseteq V$ of \emph{terminals}, a vector $p=(p_1,\ldots,p_m) \in [0,1]^m$ and a positive integer $d$, called \emph{diameter}. 
We assume nodes are perfect but links fail stochastically and independently, with 
probabilities $q_i=1-p_i$. The \emph{diameter-constrained reliability} (DCR for short), is the probability that the 
terminals of the resulting subgraph remain connected by paths composed by $d$ links, or less. 
This number is denoted by $R_{K,G}^{d}(p)$.

The general DCR computation is inside the class of $\mathcal{N}\mathcal{P}$-Hard problems, since is subsumes the complexity that a random graph is connected.  
In this paper, the computational complexity of DCR-subproblems is discussed in terms of the number of terminal nodes $k=|K|$ and diameter $d$. 
Either when $d=1$ or when $d=2$ and $k$ is fixed, the DCR is inside the class $\mathcal{P}$ of polynomial-time problems.  
The DCR turns $\mathcal{N}\mathcal{P}$-Hard when $k \geq 2$ is a fixed input parameter and $d\geq 3$.

The case where $k=n$ and $d \geq 2$ is fixed are not studied in prior literature. 
Here, the $\mathcal{N}\mathcal{P}$-Hardness of this case is established. 
\end{abstract}

\maketitle
\tableofcontents

\section{Introduction}
The definition of DCR has been introduced by H\'ector Cancela and Louis Petingi, inspired in delay-sensitive applications over 
the Internet infrastructure~\cite{PR01}. 
Nevertheless, its applications over other fields of knowledge enriches the motivation 
of this problem in the research community~\cite{Colbourn99reliabilityissues}. 

We wish to communicate special nodes in a network, called \emph{terminals}, 
by $d$ hops or less, in a scenario where nodes are perfect but links fail stochastically and independently. 
The all-terminal case with $d=n-1$ is precisely the probability that a random graph is connected, or 
\emph{classical reliability problem} (CLR for short). Arnon Rosenthal proved that the CLR is inside the class 
of $\mathcal{N}\mathcal{P}$-Hard problems~\cite{Rosenthal}. As a corollary, the general DCR is $\mathcal{N}\mathcal{P}$-Hard as well, 
hence intractable unless $\mathcal{P}=\mathcal{N}\mathcal{P}$. 

The focus of this paper is the computational complexity of DCR subproblems, in terms of the number of terminals $k$ and 
diameter $d$. In Section~\ref{DCR} a formal definition of DCR is provided as a particular instance of a coherent stochastic 
binary system. The computational complexity of the DCR is discussed in terms of the diameter and number of terminals in Section~\ref{Prior}. 
The main contribution of this paper is included in Section~\ref{Result}. Specifically, the DCR is in the computational class of 
$\mathcal{N}\mathcal{P}$-Hard problems in the all-terminal scenario ($k=n$) with a given diameter $d \geq 2$. 
This result closes the complexity analysis of the DCR in terms of $k$ and $d$. 
Concluding remarks and open problems are summarized in Section~\ref{Conclusion}.

\section{Terminology}\label{DCR}
We are given a system with $m$ components. These components are either ``up'' or ``down'', and 
the binary state is captured by a word $x=(x_1,\ldots,x_m)$. Additionally we have a 
structure function $\phi: \{0,1\}^m \to \{0,1\}$ such that $\phi(x)=1$ if and only if the system works 
under state $x$. When the components work independently and stochastically 
with certain probabilities of operation $p=(p_1,\ldots,p_m)$, the pair $(\phi,p)$ defines a \emph{stochastic binary system}, 
or SBS for short, following the terminology of Michael Ball~\cite{Ball1986}. 
An SBS is \emph{coherent} whenever $x\leq y$ implies that $\phi(x) \leq \phi(y)$, where the partial order set $(\leq,\{0,1\}^m)$ 
is bit-wise (i.e. $x\leq y$ if and only if $x_i \leq y_i$ for all $i\in \{1,\ldots,m\}$). If $\{X_i\}_{i=1,\ldots,m}$ 
is a set of independent binary random variables with $P(X_i=1)=p_i$ and $X=(X_1,\ldots,X_m)$, then 
$r=E(\phi(X)) = P(\phi(X)=1)$ is the \emph{reliability} of the SBS.

Now, consider a simple graph $G=(V,E)$, a subset $K \subseteq V$ and a positive integer $d$. 
A subgraph $G_x=(V,E_x)$ is \emph{$d$-$K$-connected} if $d_x(u,v)\leq d, \forall \{u,v\} \subseteq K$, 
where $d_x(u,v)$ is the distance between nodes $u$ and $v$ in the graph $G_x$. 
Let us choose an arbitrary order of the edge-set $E=\{e_1,\ldots,e_m\}$, $e_i\leq e_{i+1}$. 
For each subgraph $G_x=(V,E_x)$ with $E_x \subseteq E$, we identify a binary word $x \in \{0,1\}^m$, 
where $x_i=1$ if and only if $e_i \in E_x$; this is clearly a bijection. 
Therefore, we define the structure $\phi:\{0,1\}^m \to \{0,1\}$ such that $\phi(x)=1$ if and only if the graph $G_x$ 
is $d$-$K$-connected. If we assume nodes are perfect but links fail stochastically and independently ruled 
by the vector $p=(p_1,\ldots,p_m)$, the pair $(\phi,p)$ is a coherent SBS. Its reliability, 
denoted by $R_{K,G}^{d}(p)$, is called \emph{diameter constrained reliability}, or DCR for short. 
A particular case is $R_{K,G}^{m-1}(p)$, called \emph{classical reliability}, or CLR for short.

In all coherent SBS, a \emph{pathset} is a state $x$ such that $\phi(x)=1$. 
A \emph{minpath} is a state $x$ such that $\phi(x)=1$ but $\phi(y)=0$ for all $y<x$ (i.e. a minimal pathset). 
A \emph{cutset} is a state $x$ such that $\phi(x)=0$, while a \emph{mincut} is a state $x$ such that $\phi(x)=0$ 
but $\phi(y)=1$ if $y>x$ (i.e. a minimal cutset).  

Recall that a \emph{vertex cover} in a graph $G=(V,E)$ is a subset $V^{\prime} \subseteq V$ such that $V^{\prime}$ meets all links in $E$. 
The graph $G$ is bipartite if there exists a bipartition $V = V_1 \cup V_2$ such that $E \subseteq V_1 \times V_2$.

\section{Complexity}\label{Prior}
The class $\mathcal{N}\mathcal{P}$ is the set of problems polynomially solvable by 
a non-deterministic Turing machine~\cite{Garey:1979:CIG:578533}. 
A problem is $\mathcal{N}\mathcal{P}$-Hard if it is at least as hard as every problem in the set $\mathcal{N}\mathcal{P}$ 
(formally, if every problem in $\mathcal{N}\mathcal{P}$ has a polynomial reduction to the former). 
It is widely believed that $\mathcal{N}\mathcal{P}$-Hard problems are intractable (i.e. there is no polynomial-time algorithm to solve them). 
An $\mathcal{N}\mathcal{P}$-Hard problem is $\mathcal{N}\mathcal{P}$-Complete if it is inside the class $\mathcal{N}\mathcal{P}$. 
Stephen Cook proved that the joint satisfiability of an input set of clauses in disjunctive form is an $\mathcal{N}\mathcal{P}$-Complete 
decision problem; in fact, the first known problem of this class~\cite{Cook1971}. 
In this way, he provided a systematic procedure to prove that a certain problem 
is $\mathcal{N}\mathcal{P}$-Complete. Specifically, it suffices to prove that the problem is inside the class $\mathcal{N}\mathcal{P}$, 
and that it is at least as hard as an $\mathcal{N}\mathcal{P}$-Complete problem. 
Richard Karp followed this hint, and presented the first 21 combinatorial problems inside this class~\cite{Karp1972}. 
Leslie Valiant defines the class \#$\mathcal{P}$ of counting problems, such that testing whether an element 
should be counted or not can be accomplished in polynomial time~\cite{Valiant1979}. 
A problem is~\#$\mathcal{P}$-Complete if it is in the set \#-$\mathcal{P}$ and it is at least as hard as any problem of that class.  

Recognition and counting minimum cardinality mincuts/minpaths are at least as hard as computing the reliability of a coherent 
SBS~\cite{Ball1986}. Arnon Rosenthal proved the CLR is $\mathcal{N}\mathcal{P}$-Hard~\cite{Rosenthal},  
showing that the minimum cardinality mincut recognition is precisely Steiner-Tree problem, included in Richard Karp's list. 
The CLR for both two-terminal and all-terminal cases are still $\mathcal{N}\mathcal{P}$-Hard, as Michael Ball and J. Scott Provan proved 
by reduction to counting minimum cardinality $s-t$ cuts~\cite{provan83}. 
As a consequence, the general DCR is $\mathcal{N}\mathcal{P}$-Hard as well. Later effort has been focused to particular cases of the DCR, 
in terms of the number of terminals $k=|K|$ and diameter $d$.

When $d=1$ all terminals must have a direct link, $R_{K,G}^{1}= \prod_{\{u,v\} \subseteq K} p(uv)$, 
where $p(uv)$ denotes the probability of operation of link $\{u,v\} \in E$, and $p(uv)=0$ if $\{u,v\} \notin E$. 
The problem is still simple when $k=d=2$. In fact, $R_{\{u,v\},G}^{2} =1- (1-p(uv))\prod_{w \in V-\{u,v\}}(1-p(uw)p(wv))$. 
H\'ector Cancela and Louis Petingi rigorously proved that the DCR is $\mathcal{N}\mathcal{P}$-Hard when $d\geq 3$ and $k\geq 2$ 
is a fixed input parameter, in strong contrast with the case $d=k=2$~\cite{CP2004}. 
Its proof is the main source of inspiration of this paper, and will be revisited in Section~\ref{Result}. 
The literature offers at least two proofs that the DCR has a polynomial-time 
algorithm when $d=2$ and $k$ is a fixed input parameter~\cite{sartor:thesis,SartorITOR2013a}. 
Pablo Sartor et. al. present a recursive proof~\cite{sartor:thesis}, while Eduardo Canale et. al. present 
an explicit expression for $R_{K,G}^{2}$ that is computed in a polynomial time of elementary operations~\cite{SartorITOR2013a}. 
Figure~\ref{DCR-complex} summarizes the known results for the computational complexity of the DCR in terms of $d$ and $k$.

\begin{figure}[htb]\centering{
\begin{tikzpicture} [scale=1.4,nodop/.style={inner sep=0pt
}]

\path[draw] (0,1) -- (0,5) -- (5,5) -- (5,1);
\draw[dotted] (-0.3,2) -- (7,2);
\draw[dotted] (-0.3,4) -- (7,4);
\draw[dotted] (1.5,5.3) -- (1.5,4);

\node at (2.5, 5.7) {$k$ (fixed)};
\node at (-0.8, 3) {$d$};
\path[->][dotted] (-0.8, 3.2) edge (-0.8, 4);
\path[->][dotted] (-0.8, 2.8) edge (-0.8, 2);

\node at (0.75,5.3) {$2$};
\node at (1.9, 5.3) {$3\ldots$};
\node at (-0.4, 4.5) {$2$};
\node at (-0.4, 3.8) {$3$};
\node at (-0.4, 3) {.}; \node at (-0.4, 3.1) {.}; \node at (-0.4, 2.9) {.}; 
\node at (-0.4, 2.2) {$n-2$};
\node at (-0.4, 1.8) {$n-1$};
\node at (-0.4, 1.3) {.}; \node at (-0.4, 1.4) {.}; \node at (-0.4, 1.2) {.}; 

\node at (0.75, 4.5) {$O(n)$~\cite{CP2004}};
\node at (3.25, 4.5) {$O(n)$~\cite{SartorITOR2013a}};
\node at (2.5, 3) {$\mathcal{N}\mathcal{P}$-Hard~\cite{CP2004}};
\node at (2.5, 1.5) {$\mathcal{N}\mathcal{P}$-Hard~\cite{Rosenthal}};

\node at (6, 1.5) {$\mathcal{N}\mathcal{P}$-Hard~\cite{provan83}};

\path[draw] (5,5) -- (7,5) -- (7,1);
\path[draw] (5,5) -- (5,6);
\node at (6, 5.7) {$k=n$ or free};
\node at (6, 4.5) {Unknown};
\node at (6, 2.50) {Unknown};

\end{tikzpicture}} \caption{DCR Complexity in terms of the diameter $d$ and number of terminals $k=|K|$}\label{DCR-complex}
\end{figure}

\clearpage

\section{Main theorem} \label{Result}
\emph{The DCR is inside the class of $\mathcal{N}\mathcal{P}$-Hard problems in the all-terminal case with diameter $d\geq 2$}. 
We first prove the result when $d\geq 3$, and separately establish the case $d=2$. 

The main source of inspiration for the first result is the article authored by H\'ector Cancela and Louis Petingi~\cite{CP2004}, 
where they proved that the DCR is $\mathcal{N}\mathcal{P}$-Hard when 
$d\geq 3$ and $k\geq 2$ is a fixed input parameter. There, the authors prove first that the result holds for $k=2$, and 
they further generalize the result for fixed $k\geq 2$. For our purpose it will suffice to revisit the first part. 
Before, we state a technical result first proved by Michael Ball and Scott Provan~\cite{BallProvan1983b}. 
\begin{lem}~\cite{BallProvan1983b}
Counting the number of vertex covers of a bipartite graph is \#$\mathcal{P}$-Complete.
\end{lem}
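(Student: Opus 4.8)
The plan is to establish the two requirements of \#$\mathcal{P}$-completeness separately. Membership in \#$\mathcal{P}$ is immediate: given a bipartite graph $G=(V,E)$ and any subset $V'\subseteq V$, one checks in polynomial time whether $V'$ meets every edge, so the number of vertex covers is exactly the number of accepting certificates of a polynomial-time verifier. For the hardness half I would first reduce the problem to an equivalent one. Since a set $S\subseteq V$ is independent if and only if its complement $V\setminus S$ is a vertex cover, complementation is a bijection between independent sets and vertex covers; hence in every graph, and in particular in every bipartite graph, the two families are equinumerous. It therefore suffices to prove that counting the independent sets of a bipartite graph is \#$\mathcal{P}$-hard.

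For this I would give a polynomial-time Turing reduction from the counting of independent sets in an \emph{arbitrary} graph $G=(V,E)$, a problem that is classically \#$\mathcal{P}$-complete (it coincides with counting the satisfying assignments of a monotone $2$-CNF formula, one clause $\neg(x_u\wedge x_v)$ per edge). The key construction turns $G$ into a family of bipartite graphs: for an integer $t\geq 1$, let $H_t$ be obtained by deleting every edge $\{u,v\}$ and inserting in its place $t$ new vertices, each joined to both $u$ and $v$. Every $H_t$ is bipartite, with the original vertices on one side and all inserted vertices on the other, since the inserted vertices are pairwise non-adjacent and the original vertices carry no edges among themselves. Counting independent sets of $H_t$ gives, after a short combinatorial analysis, the identity
\[
Z(H_t)=\sum_{c\subseteq V}\bigl(2^{t}\bigr)^{m(c)}=\sum_{j\ge 0}N_j\,(2^{t})^{j},
\]
where $m(c)$ is the number of edges of $G$ induced by $c$ and $N_j=|\{c\subseteq V: G[c]\text{ has exactly }j\text{ edges}\}|$. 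In particular $N_0$ is precisely the number of independent sets of $G$, the quantity we wish to compute.

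Thus $Z(H_t)$ is, as a function of the variable $u=2^{t}$, a polynomial $\sum_j N_j u^{j}$ of degree at most $m=|E|$ with unknown nonnegative integer coefficients. Calling the bipartite oracle on $H_1,\dots,H_{m+1}$ produces its values at the $m+1$ distinct points $u=2,4,\dots,2^{m+1}$; inverting the resulting Vandermonde system recovers all the $N_j$, and returning $N_0$ completes the reduction. The whole procedure uses polynomially many oracle calls on instances of polynomial size, and all integers involved have polynomially many bits, so it runs in polynomial time. The main obstacle, and the heart of the argument, is the move from the \emph{unweighted} counts the oracle supplies to the single coefficient $N_0$ we need: the oracle cannot be evaluated at the ``forbidden'' point $u=0$ that would read off $N_0$ directly, so one is forced to learn the entire polynomial by interpolation and only then extract its constant term. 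Verifying the counting identity for $Z(H_t)$, confirming that $G[c]$ is edgeless exactly when $c$ is independent, and checking that the interpolation nodes are distinct and the arithmetic stays polynomial are the points demanding care; once these are in place the reduction, and hence the \#$\mathcal{P}$-hardness of counting bipartite vertex covers, follows.
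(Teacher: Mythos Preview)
Your argument is correct. Membership in \#$\mathcal{P}$ is immediate; the complementation bijection reduces the task to counting independent sets in bipartite graphs; and the gadget that replaces each edge by $t$ parallel length-$2$ paths yields, after the reindexing $c\mapsto V\setminus c$, the identity $Z(H_t)=\sum_{j}N_j(2^{t})^{j}$ with $N_0$ equal to the number of independent sets of the original graph. Evaluating at $t=1,\ldots,m+1$ and inverting the Vandermonde system is a legitimate polynomial-time Turing reduction from \#IS in arbitrary graphs, which is classically \#$\mathcal{P}$-complete. The bit-size and instance-size bounds you state are accurate.

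There is nothing to compare against in the paper itself, however: the lemma is quoted from Ball and Provan~\cite{BallProvan1983b} and used as a black box, with no proof reproduced. You have supplied a self-contained argument where the authors simply cite the literature. If anything, it is worth remarking that your reduction is a \emph{Turing} reduction (polynomially many oracle calls plus interpolation) rather than a parsimonious many-one reduction; this is entirely adequate for \#$\mathcal{P}$-hardness, and is in fact the standard route for this particular result.
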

 
\begin{prop}\label{Step-1}~\cite{CP2004}
The DCR is $\mathcal{N}\mathcal{P}$-Hard when $k=2$ and $d\geq 3$. 
\end{prop}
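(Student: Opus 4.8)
The plan is to give a polynomial-time reduction from the problem of counting vertex covers of a bipartite graph---which is \#$\mathcal{P}$-Complete by the preceding lemma~\cite{BallProvan1983b}, and hence $\mathcal{N}\mathcal{P}$-Hard---to the evaluation of the DCR on a two-terminal network with the prescribed diameter $d$. This is essentially the strategy of Cancela and Petingi~\cite{CP2004}, which I would reconstruct as follows. Starting from a bipartite graph $H=(V_1\cup V_2,F)$ with $F\subseteq V_1\times V_2$, build a graph $G$ by adjoining two terminals $s$ and $t$, joining $s$ to every vertex of $V_1$ and $t$ to every vertex of $V_2$, and retaining one edge for each pair in $F$. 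Declare the links coming from $F$ to be perfect ($p=1$), and give each terminal-incident link probability $\tfrac12$. For a general $d\geq 3$, lengthen the $t$-side of this backbone by inserting a perfect path of $d-3$ auxiliary vertices (for $d=3$ nothing is added), so that the minimum $s$--$t$ distance in $G$ becomes exactly $d$. The instance $(G,\{s,t\},p,d)$ has size polynomial in $H$, so it suffices to show that $R^{d}_{\{s,t\},G}(p)$ determines the number of vertex covers of $H$.

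For the counting step, observe that only the $m:=|V_1|+|V_2|$ terminal-incident links are random, each up with probability $\tfrac12$, whence $R^{d}_{\{s,t\},G}(p)=2^{-m}N$, where $N$ is the number of states $x$ for which $G_x$ is $d$-$\{s,t\}$-connected. Encode a state by the pair $(S_1,S_2)$ with $S_i\subseteq V_i$, recording which terminal-neighbours are reached by an up link. Every $s$--$t$ path must leave $s$ into $V_1$, alternate between $V_1$ and $V_2$ along $F$-edges, and finally pass through the perfect tail to $t$; its length therefore lies in $\{d,d+2,d+4,\dots\}$, and it has length $d$ precisely when it uses a single $F$-edge joining some $u\in S_1$ to some $v\in S_2$. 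Hence $G_x$ is $d$-$\{s,t\}$-connected iff $S_1\cup S_2$ spans an edge of $H$, i.e. iff the set $T:=(V_1\setminus S_1)\cup(V_2\setminus S_2)$ is \emph{not} a vertex cover of $H$. Since $(S_1,S_2)\mapsto T$ is a bijection from states onto the subsets of $V(H)$, the number of \emph{disconnected} states equals the number of vertex covers of $H$, so that $\#\{\text{vertex covers of }H\}=2^{m}\left(1-R^{d}_{\{s,t\},G}(p)\right)$. Thus a polynomial-time algorithm for the DCR would count vertex covers of bipartite graphs in polynomial time, which establishes that this DCR subproblem is at least as hard as a \#$\mathcal{P}$-Complete problem, hence $\mathcal{N}\mathcal{P}$-Hard.

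I expect the only genuinely delicate point to be the path-length bookkeeping after the subdivision: one must verify that for \emph{every} $d\geq 3$---in particular in the boundary case $d=3$, where the tail is empty---the bipartite alternation forbids any $s$--$t$ path of length $\leq d$ other than those using exactly one $F$-edge, so that $N$ counts precisely the intended pairs $(S_1,S_2)$. The remaining ingredients---that $G$ is simple, that both the graph and the probability vector are computable in polynomial time, and that the complementation map carries disconnected states bijectively onto vertex covers---are routine.
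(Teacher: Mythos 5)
Your proposal is correct and follows essentially the same reduction as the paper: attach a perfect path of $d-3$ auxiliary vertices to one terminal, make only the $|V_1|+|V_2|$ links at the two terminal junctions random with probability $\tfrac12$, and observe that disconnected states correspond bijectively (via complementation of the ``up'' neighbourhoods) to vertex covers of the bipartite graph, yielding $\#\{\text{covers}\}=2^{|V_1|+|V_2|}\bigl(1-R^{d}_{\{s,t\},G}(1/2)\bigr)$. The only differences are cosmetic --- you place the tail on the $t$-side rather than the $s$-side and spell out the parity argument ruling out shorter paths, which the paper leaves implicit.
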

\begin{proof}
Let $d^{\prime}=d-3 \geq 0$ and $P=(V(P),E(P))$ a simple path with node set $V(P)=\{s,s_1,\ldots,s_{d^{\prime}}\}$ 
and edge set $E(P)=\{\{s,s_1\},\{s_1,s_2\},\ldots, \{s_{d^{\prime}-1},s_{d^{\prime}}\}\}$. 
For each bipartite graph $G=(V,E)$ with $V=A \cup B$ and $E\subseteq A\times B$ 
we build the following auxiliary network:
\begin{equation}
G^{\prime} = \{(A \cup B \cup V(P) \cup \{t\}, E \cup E(P) \cup I\},
\end{equation}
where $I=\{ \{s_{d^{\prime}},a\}, a\in A\} \cup \{ \{b,t \}, b\in B\}$, and 
all links of $G^{\prime}$ are perfect but links in $I$, which fail independently with identical probabilities $p=1/2$. 
Consider the terminal set $K=\{s,t\}$. The auxiliary graph $G^{\prime}$ is illustrated in Figure~\ref{fig:covercut}. 
The reduction from the bipartite graph to the two-terminal instance is polynomial.

\begin{figure}[h!]\centering{
\begin{tikzpicture}[scale=1]
\tikzstyle{every node}=[draw,shape=circle]
\path 
(-3,0) node (p0) {$s$}
(-2,0) node (p1) {$s_1$}
(-1,0) node (p2) {$s_2$}
 (0,0) node (p3) {$s_3$}

(1,-1) node (p4) {$a_1$}
(1, 0) node (p5) {$a_2$}
(1, 1) node (p6) {$a_3$}

(3,-1) node (p7) {$b_1$}
(3, 0) node (p8) {$b_2$}
(3, 1) node (p9) {$b_3$}

(5,0) node (p10) {$t$};

\draw 
(p0) -- (p1)
(p1) -- (p2)
(p2) -- (p3)

(p4) -- (p7)
(p4) -- (p9)

(p5) -- (p7)
(p5) -- (p8)

(p6) -- (p8)
(p6) -- (p9)

(p3) -- (p4)
(p3) -- (p5)
(p3) -- (p6)
(p7) -- (p10)
(p8) -- (p10)
(p9) -- (p10);
\end{tikzpicture}
} \caption{Example of auxiliary graph $G^{\prime \prime}$ with terminal set $\{s,t\}$ and $d=6$, 
for the particular bipartite instance $C_6$.}\label{fig:covercut}
\end{figure}
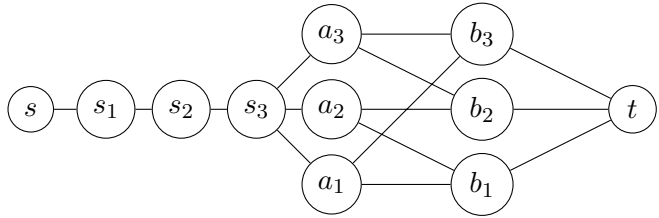
A set cover $A^{\prime} \cup B^{\prime} \subset A \cup B$ induces a cutset 
$I^{\prime} = \{ \{s_{d^{\prime}},a\}, a\in A^{\prime}\} \cup \{ \{b,t \}, b\in B^{\prime}\}$ 
(i.e. if all links in $I^{\prime}$ fail, the nodes $\{s,t\}$ are not connected). Reciprocally, that cutset determines 
a set cover. Therefore, the number of cutsets $|\mathcal{C}|$ is precisely the number of vertex covers of the bipartite graph $|\mathcal{B}|$. 
Moreover:
\begin{equation*}
|\mathcal{B}| = 2^{|A|+|B|} (1- R_{\{s,t\},G^{\prime}}^{d}(1/2)).
\end{equation*}  
Thus, the DCR for the two-terminal case is at least as hard as counting vertex covers of bipartite graphs.
\end{proof}

The result for $d \geq 3$ is perhaps a direct Corollary of Proposition~\ref{Step-1}:
\begin{thm}\label{romero}
The DCR is $\mathcal{N}\mathcal{P}$-Hard when $k=n$ and $d\geq 3$.  
\end{thm}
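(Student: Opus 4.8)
The plan is to derive Theorem~\ref{romero} for $d\ge 3$ as a short elaboration of Proposition~\ref{Step-1}. I would take the two-terminal network $G'$ built there from a bipartite graph $G=(A\cup B,E)$ and a diameter $d\ge 3$ (recall $d'=d-3$, the backbone $s,s_1,\dots,s_{d'}$, and that the only imperfect links are those of $I$, each with probability $1/2$), and ``inflate'' it to an all-terminal instance by adding a single auxiliary vertex, so that the structure function is unchanged. The first ingredient I would isolate from the proof of Proposition~\ref{Step-1} is the \emph{good-event} description of the two-terminal structure: for a state $x$ of the links of $I$, the subgraph $G'_x$ is $d$-$\{s,t\}$-connected iff there is an edge $\{a,b\}\in E$ with both $\{s_{d'},a\}$ and $\{b,t\}$ operational --- equivalently, $G'_x$ is \emph{not} $d$-$\{s,t\}$-connected iff $S_x=\{a\in A:\{s_{d'},a\}\text{ fails}\}\cup\{b\in B:\{b,t\}\text{ fails}\}$ is a vertex cover of $G$ --- because the backbone forces every $s$-$t$ walk of length at most $d$ to use exactly $d'$ backbone links, one link of $I$ into $A$, exactly one bipartite link, and one link of $I$ out of $B$.

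Next I would construct $G''$ from $G'$ by adding one new vertex $c$ together with perfect links joining $c$ to every vertex of $A\cup B$ and to nothing else; in particular $c$ is adjacent neither to $s$, nor to $t$, nor to any $s_i$. Taking $K''=V(G'')$ makes this an all-terminal instance with $k=n$, the imperfect links are still exactly those of $I$ with probability $1/2$, and the reduction is polynomial. The heart of the argument is the claim that, for every state $x$, the subgraph $G''_x$ is $d$-$K''$-connected if and only if the good event holds for $x$, so that the all-terminal structure function of $G''$ and the $\{s,t\}$-structure function of $G'$ coincide as functions of the states of $I$. For the ``only if'' direction, since $c$ misses every backbone vertex and $t$, any $s$-$t$ walk in $G''_x$ still has to leave the backbone through some link $\{s_{d'},a\}\in I$ and to enter $t$ through some link $\{b,t\}\in I$; and getting from such an $a$ to a usable $b$ costs one link when the good event holds and at least two links otherwise (through $c$, or along two bipartite links), so absent the good event $d_x(s,t)\ge d'+4=d+1$. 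For the ``if'' direction, fix $a_0,b_0$ witnessing the good event and check that every pair of vertices of $G''$ is joined by a walk of length at most $d$: pairs inside $A\cup B\cup\{c\}$ through $c$ in at most two hops; a pair with one endpoint among $s,s_1,\dots,s_{d'}$ by running along the backbone to $s_{d'}$, through $\{s_{d'},a_0\}$, and then at most two more links through $c$ or through $b_0$, for a total of at most $d'+3=d$; and pairs involving $t$ symmetrically through $b_0$.

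With the claim in hand the conclusion is immediate: all added links are perfect and $c$ is a terminal, so $R^{d}_{K'',G''}(1/2)=R^{d}_{\{s,t\},G'}(1/2)$, and the bijection $x\leftrightarrow S_x$ between the $2^{|A|+|B|}$ states and the subsets of $A\cup B$ gives $|\mathcal{B}|=2^{|A|+|B|}\bigl(1-R^{d}_{K'',G''}(1/2)\bigr)$, where $\mathcal{B}$ is the set of vertex covers of $G$. Hence computing the all-terminal DCR with a fixed diameter $d\ge 3$ is at least as hard as counting the vertex covers of a bipartite graph, which is \#$\mathcal{P}$-Complete~\cite{BallProvan1983b}; in particular it is $\mathcal{N}\mathcal{P}$-Hard. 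The point I expect to be the real work is the ``if'' direction of the claim, namely checking that every detour through $c$ stays within length $d$ --- which is exactly why $c$ must be kept away from $s$, from $t$, and from the backbone --- whereas the ``only if'' direction is where the reduction would silently collapse (to a trivially computable quantity) if $c$ were attached to those vertices instead.
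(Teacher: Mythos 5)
Your proof is correct and follows essentially the same route as the paper: both extend the Cancela--Petingi two-terminal gadget $G^{\prime}$ with perfect links so that every pair other than $\{s,t\}$ automatically meets the diameter bound, making the all-terminal structure function coincide with the two-terminal one and then reusing the same vertex-cover counting identity. The only difference is the gadget --- the paper adds perfect cliques on $A$ and on $B$, whereas you add a single apex vertex $c$ joined to $A\cup B$ by perfect links --- an equally valid (and slightly more economical) variant.
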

\begin{proof}
Extend the auxiliary graph $G^{\prime}=(V^{\prime},E^{\prime})$ to $G^{\prime \prime}=(V^{\prime \prime },E^{\prime \prime})$, 
where $V^{\prime \prime}=V^{\prime}$ and 
$E^{\prime \prime} = E^{\prime} \cup \{ \{a,a^{\prime}\}, a\neq a^{\prime}, a, a^{\prime}\in A\} 
\cup \{ \{b,b^{\prime}\}, b\neq b^{\prime},  b, b^{\prime} \in B\}$. 
In words, just add links in order to connect all nodes from $A$, and all nodes from $B$. We keep the same probabilities 
of operation that in $G^{\prime}$, and the new links are perfect. 

Consider now the all-terminal case $K=V^{\prime \prime}$ for $G^{\prime \prime}$, and given diameter $d \geq 3$. 
The key is to observe that the cutsets in the all-terminal scenario 
for $G^{\prime \prime}$ are precisely the $s-t$ cutsets in $G^{\prime}$, and they have the same probability. 
Indeed, each pair of terminals from the set $A$ are directly connected by perfect links; the same holds in $B$. 
The distance between $s$ and $s_{d^{\prime}}$ is $d^{\prime}=d-3<d$, so these nodes (and all the intermediate ones) 
respect the diameter constraint. Finally, if there were an $s-t$ path (i.e. a path from $s$ to $t$), 
the diameter of $G^{\prime \prime}$ would be exactly $d$. Therefore, $R_{\{s,t\},G^{\prime}}^{d} = R_{V^{\prime \prime},G^{\prime \prime}}^{d}$, 
and again: 
\begin{align*}
|\mathcal{B}| &= 2^{|A|+|B|} (1- R_{\{s,t\},G^{\prime}}^{d}(1/2))\\
              &= 2^{|A|+|B|} (1- R_{V^{\prime \prime},G^{\prime \prime}}^{d}(1/2)).
\end{align*}  
Thus, the DCR for the all-terminal case is at least as hard as counting vertex covers of bipartite graphs.
\end{proof}

\begin{thm}\label{canale}
The DCR is $\mathcal{N}\mathcal{P}$-Hard when $k=n$ and $d = 2$.  
\end{thm}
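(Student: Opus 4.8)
The plan is to give a reduction from counting vertex covers of bipartite graphs, as in Proposition~\ref{Step-1}, but with a genuinely new gadget: the path $P$ used there needs $d\ge3$, so for $d=2$ I would instead lean on the all-terminal requirement to make the diameter bound interact with the input graph. Given a bipartite graph $G=(A\cup B,E)$, I would build an auxiliary graph $G''$ on the node set $V''=A\cup B\cup\{u,v\}$ with two new nodes $u,v$, take $K=V''$ and $d=2$, and set the links as follows: one failing link (with $p=1/2$) $\{a,u\}$ for each $a\in A$ and $\{b,v\}$ for each $b\in B$; and perfect links $\{a,b\}$ for every \emph{non-edge} $\{a,b\}\notin E$ of $G$ (the bipartite complement of $G$), $\{u,b\}$ for every $b\in B$, $\{v,a\}$ for every $a\in A$, and $\{u,v\}$. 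This construction is polynomial in the size of $G$.

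The core step would be a distance dictionary for the random subgraph $G''_x$: in \emph{every} state $x$, all pairs of nodes lie at distance at most $2$ \emph{except possibly} the pairs $\{a,b\}$ with $\{a,b\}\in E$, and such a pair satisfies $d_x(a,b)\le2$ if and only if at least one of $\{a,u\}$, $\{b,v\}$ is up. The verifications are short: two nodes of $A$ have the common neighbour $v$, two nodes of $B$ the common neighbour $u$, a non-edge cross pair is adjacent, and every pair involving $u$ or $v$ is within distance $2$ through the perfect links $\{u,b\}$, $\{v,a\}$, $\{u,v\}$; while for a pair $\{a,b\}\in E$ the node $u$ (resp.\ $v$) is a common neighbour exactly when $\{a,u\}$ (resp.\ $\{b,v\}$) is up, and when both fail one checks that $a$ and $b$ have no common neighbour and are non-adjacent, so $d_x(a,b)=3$. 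Hence $G''_x$ is $2$-$K$-connected if and only if the set $S\subseteq A\cup B$ of end-nodes of the failed links contains no edge of $G$, i.e.\ if and only if $S$ is an independent set of $G$.

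Every state is then equiprobable with probability $2^{-(|A|+|B|)}$, so $R_{V'',G''}^{2}(1/2)$ equals $2^{-(|A|+|B|)}$ times the number of independent sets of $G$. Since $S$ is independent in $G$ exactly when $(A\cup B)\setminus S$ is a vertex cover of $G$, the number of independent sets of $G$ equals the number $|\mathcal{B}|$ of its vertex covers, and therefore
\[
|\mathcal{B}|=2^{|A|+|B|}\,R_{V'',G''}^{2}(1/2).
\]
As counting vertex covers of a bipartite graph is \#$\mathcal{P}$-Complete (the Lemma), the all-terminal DCR with $d=2$ is at least as hard as counting vertex covers of bipartite graphs, which proves the theorem.

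I expect the distance dictionary to be the only real obstacle. Because $d=2$ allows no detour of length $\ge3$, one must certify \emph{simultaneously across all $2^{|A|+|B|}$ states} that $u$, $v$, the link $\{u,v\}$ and the complement links keep every pair other than the $E$-pairs at distance at most $2$, and that deleting the two links incident to an $E$-pair really forces that pair to distance $3$ without disconnecting $G''_x$ or perturbing other distances. The design choice that makes this work --- confining all the dependence on $E$ to exactly the $E$-pairs --- is to use the bipartite complement of $G$ for the cross links rather than $E$ itself.
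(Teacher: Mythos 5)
Your reduction is correct, but it takes a genuinely different route from the paper's. The paper attaches two universal vertices $a,b$ to an arbitrary graph $G$, lets only the links at $a$ fail, identifies the minimum-cardinality pathsets of the resulting diameter-$2$ instance with vertex covers via the neighbourhood $N_a$, and then invokes Ball's connection between reliability evaluation and counting minimum-cardinality pathsets; the count is thus extracted indirectly, through the minimal states of the structure function. You instead imitate the mechanism of Proposition~\ref{Step-1} and Theorem~\ref{romero}: starting from a bipartite $G=(A\cup B,E)$, your gadget (bipartite complement for the cross links, apex nodes $u,v$, unreliable links $\{a,u\}$ and $\{b,v\}$) makes all $2^{|A|+|B|}$ states equiprobable at $p=1/2$ and makes the pathsets exactly the states whose set of failed end-nodes is independent in $G$; complementation then yields the exact identity $|\mathcal{B}|=2^{|A|+|B|}\,R_{V'',G''}^{2}(1/2)$, so the Ball--Provan lemma quoted in the paper applies verbatim. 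Your distance dictionary is complete and checks out: the only pairs whose distance depends on the state are the $E$-pairs, and such a pair is at distance $3$ precisely when both of its unreliable links fail. What your route buys: a one-point evaluation reduction from the very \#$\mathcal{P}$-Complete problem cited in the paper, with no detour through hardness of counting minimum-cardinality pathsets; it also sidesteps a delicate point in the paper's argument, since ``$a$ reaches every node in two steps'' is really equivalent to $N_a$ being a dominating set (a vertex cover suffices only when $G$ has no isolated vertices, and domination does not imply covering), whereas your gadget encodes vertex covers exactly. What the paper's route buys is a simpler gadget, built from an arbitrary rather than bipartite graph, at the price of that subtler counting step.
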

\begin{proof}
Given a graph $G=(V,E)$, we consider the graph 
$G^{\prime} = (V \cup \{a,b\}, E \cup \{ \{x,a\}, \{x,b\}, \forall x \in V\})$. 
By its definition, $G^{\prime}$ has diameter $d=2$. All links are perfect, except the ones incident to $a$, with $p(ax)=1/2$. 
Consider the DCR for $G^{\prime}$. We will show that the number of minimum cardinality pathsets in $G^{\prime}$ is precisely 
the number of vertex covers in $G^{\prime}$. 
Since counting minimum cardinality pathsets is at least as hard as computing the reliability of a coherent 
SBS~\cite{Ball1986}, the result will follow.

A minimum cardinality pathset in $G^{\prime}$ contains all perfect links, and some links $\{a,x_1\},\ldots,\{a,x_r\}$ 
for certain nodes $x_i \in V$. Since $H$ is a minimum cardinality pathset, 
the graph $G_H = (V,H)$ has diameter $2$, but the diameter is increased under any link deletion. 
Let $N_a = \{x: \{a,x\} \in H\}$ the set of neighbor vertices for the terminal node $a$. 

The key is to observe that \emph{vertex $a$ reaches every node in two steps if and only if $N_a$ is a vertex cover}. 
Indeed, suppose $a$ reaches every node in two steps. Then, for any $x \in V \setminus N_a$ there exists 
a path $xya$, so $y \in N_a$ and thus $N_a$ is a vertex cover. 
Conversely, if $N_a$ covers $V$, let $x \in V$. Then, either $x \in N_a$ and $x$ is adjacent with $a$, or 
$x \in V\setminus N_a$ and there exists $y \in N_a \cap N_x$, so  $xya$ is a path of two hops between $x$ and $a$.

The minimality of $N_a$ as a cover follows from the minimality of $H$ as a pathset.
\end{proof}

Theorems~\ref{romero}~and~\ref{canale} jointly close the complexity analysis for the DCR problem. 
The whole picture of DCR complexity is provided in Figure~\ref{DCR-complex}, 
which closes the complexity analysis for all independent pairs $(k,d)$.

\begin{figure}[htb]\centering{
\begin{tikzpicture} [scale=1.4,nodop/.style={inner sep=0pt
}]

\path[draw] (0,1) -- (0,5) -- (5,5) -- (5,1);
\draw[dotted] (-0.3,2) -- (7,2);
\draw[dotted] (-0.3,4) -- (7,4);
\draw[dotted] (1.5,5.3) -- (1.5,4);

\node at (2.5, 5.7) {$k$ (fixed)};
\node at (-0.8, 3) {$d$};
\path[->][dotted] (-0.8, 3.2) edge (-0.8, 4);
\path[->][dotted] (-0.8, 2.8) edge (-0.8, 2);

\node at (0.75,5.3) {$2$};
\node at (1.9, 5.3) {$3\ldots$};
\node at (-0.4, 4.5) {$2$};
\node at (-0.4, 3.8) {$3$};
\node at (-0.4, 3) {.}; \node at (-0.4, 3.1) {.}; \node at (-0.4, 2.9) {.}; 
\node at (-0.4, 2.2) {$n-2$};
\node at (-0.4, 1.8) {$n-1$};
\node at (-0.4, 1.3) {.}; \node at (-0.4, 1.4) {.}; \node at (-0.4, 1.2) {.}; 

\node at (0.75, 4.5) {$O(n)$~\cite{CP2004}};
\node at (3.25, 4.5) {$O(n)$~\cite{SartorITOR2013a}};
\node at (2.5, 3) {$\mathcal{N}\mathcal{P}$-Hard~\cite{CP2004}};
\node at (2.5, 1.5) {$\mathcal{N}\mathcal{P}$-Hard~\cite{Rosenthal}};

\node at (6, 1.5) {$\mathcal{N}\mathcal{P}$-Hard~\cite{provan83}};

\path[draw] (5,5) -- (7,5) -- (7,1);
\path[draw] (5,5) -- (5,6);
\node at (6, 5.7) {$k=n$ or free};
\node at (6, 4.5) {$\mathcal{N}\mathcal{P}$-Hard};
\node at (6, 2.50) {$\mathcal{N}\mathcal{P}$-Hard};

\end{tikzpicture}} \caption{DCR Complexity in terms of the diameter $d$ and number of terminals $k=|K|$}\label{DCR-complex}
\end{figure}

\section{Concluding Remarks} \label{Conclusion}
The reliability evaluation of a particular stochastic binary system has been discussed, 
called diameter constrained reliability (DCR). When the number of terminals $k$ or diameter $d$ 
are free, the DCR is $\mathcal{N}\mathcal{P}$-Hard, since it subsumes the classical reliability problem. 
The cases $d=1$ or $d=2$ and $k$ fixed belong to the set $\mathcal{P}$ of polynomially solvable problems. 
In contrast, the DCR turns $\mathcal{N}\mathcal{P}$-Hard when $k\geq 2$ is fixed and $d\geq 3$. 
In this paper we proved that the DCR is $\mathcal{N}\mathcal{P}$-Hard for the remaining cases (i.e. where $k=n$ and $d\geq 3$). 
As a corollary, the result holds when $d \geq 3$ and $k$ is an free parameter as well. 
The DCR remains $\mathcal{N}\mathcal{P}$-Hard for all but special cases of $k$ and $d$. 

A polytime-closed formula for the DCR has been provided in prior literature only for particular families of graphs, 
such as paths, cycles, special cases of bipartite and complete graphs, spanish fans and ladders~\cite{sartor:thesis}. 
So far, algorithmic design is focused o Monte Carlo methods and Interpolation theory~\cite{rndm13}. 
Future work is required to design approximation algorithms for the DCR, and generalize the problem to 
dependent link failures.

It is worth to notice that when all components fail independently with identical probability $p=1/2$ all graphs 
occur with the same probability. Counting the number of partial graphs with diameter $d=2$ is 
thus the DCR evaluation taking $p=1/2$. This counting problem is still open, and 
remains in the heart of graph theory.

%

\bibliography{dcr}
\bibliographystyle{aomalpha}
\end{document}